\newtheorem{mydef}{Definition}
\newtheorem{theorem}{Theorem}
\newtheorem{remark}{Remark} 
\newtheorem{lem}{Lemma}
\newtheorem{assum}{Assumption}
\begin{document}
%
\title{Memory-Based Data-Driven MRAC Architecture Ensuring Parameter Convergence}
%
%
%

\author{Sayan Basu Roy,
        Shubhendu Bhasin, Indra Narayan Kar    

\thanks{Sayan Basu Roy, Shubhendu Bhasin and Indra Narayan Kar are with the Department of Electrical Engineering, Indian Institute of Technology Delhi, New Delhi, India e-mail:(sayanetce@gmail.com, sbhasin@ee.iitd.ac.in, ink@ee.iitd.ac.in).}}    

\maketitle

\begin{abstract}
 Convergence of controller parameters in standard model reference adaptive control (MRAC) requires the system states to be persistently exciting (PE), a restrictive condition to be verified online. A recent data-driven approach, concurrent learning, uses information-rich past data concurrently with the standard parameter update laws to guarantee parameter convergence without the need of the PE condition. This method guarantees exponential convergence of both the tracking and the controller parameter estimation errors to zero, whereas, the classical MRAC merely ensures asymptotic convergence of tracking error to zero. However, the method requires knowledge of the state derivative, at least at the time instances when the state values are stored in memory. The method further assumes knowledge of the control allocation matrix. This paper addresses these limitations by using a memory-based finite-time system identifier in conjunction with a data-driven approach, leading to convergence of both the tracking and the controller parameter estimation errors without the PE condition and knowledge of the system matrices and the state derivative. A Lyapunov based stability proof is included to justify the validity of the proposed data-driven approach. Simulation results demonstrate the efficacy of the suggested method.
\end{abstract}

\begin{IEEEkeywords}
MRAC, CL, Data-Driven, Parameter Convergence, PE.
\end{IEEEkeywords}

%
\IEEEpeerreviewmaketitle

\section{Introduction}
%
%
%
%
\IEEEPARstart{T}{he} design objective of Model Reference Adaptive Control (MRAC) is to make the system imitate the response of a chosen reference model. Classical and many recent MRAC techniques that use merely instantaneous data for adaptation (see \cite{Nr, An, Tao, Wt, Cao} and references there in) require that the system states be persistently exciting (PE) to ensure the convergence of the parameter estimates to their true values \cite{Tao}. In \cite{Boyd}, Boyd and Sastry proved that the PE condition on the  regressor translates to the reference input having as many spectral lines as the number of unknown parameters, however, the condition is rather restrictive. Enforcing the PE condition through exogenous excitation of the input is not always realizable and it is often impractical to monitor online whether a signal will remain PE as the condition depends on the future values of the signal. Since parameter convergence under the PE condition is difficult to apply, various algorithms like e-modification, $\sigma$-modification etc. are proposed in literature to guarantee boundedness of the parameters  \cite{An},\cite{Kok}. \par
Similar to MRAC, other model-based control methods such as model-based reinforcement learning (MBRL) \cite{Abb, Mitr, Mar, Dix, Bhasin2}, and model-based predictive control (MPC)\cite{Fuku, Guay3, Giri7, Asw}, require the controller to be developed based on the estimates of the unknown parameters. Therefore, the stability of the closed-loop system and the performance of the control law in all these cases crucially rely on parameter convergence, which requires restrictive PE condition. \par
Recent works \cite{Lewis}-\cite{Giri2} on learning and data-driven control methods have shown promise in improving tracking performance as they use input-output data along the system trajectory which carries sufficient information about the unknown system and the controller parameters. Girish et al. \cite{Giri1, Giri2, Giri9, Giri8} proposed a novel approach, coined as concurrent learning (CL), where information-rich past data is stored and concurrently used along with gradient based parameter update laws. Although the parameter estimation error is not directly measurable, the intelligent introduction of the concurrent learning term computed from the past stored data is proportional to parameter estimation error.  A sufficient condition associated with the rank of a matrix formed out of stored data is required for parameter convergence. Unlike the PE condition, the rank condition is more realistic and guarantees exponential convergence of tracking and parameter estimation errors to zero. Moreover, CL-based techniques are employed in the context of adaptive optimal control in \cite{Dix}, \cite{Dix3}, \cite{Hes} and experimental success have been found in \cite{Giri10}, \cite{Dix2}.\par 
Although concurrent learning is a powerful online adaptive control method, it requires the state derivative information at the time points at which the state values are stored \cite{Giri1}. In many practical situations, the state derivative is not measurable. In \cite{Giri1}, an optimal fixed point smoothing technique is used to estimate derivative at past values using a forward and backward Kalman filter \cite{Gelb}. However, the estimation method requires storing several forward and backward data points in time leading to a high memory requirement. Further, the state derivative estimation error degrades the exponential convergence result to a weaker one of uniformly ultimately bounded (UUB) stability. Moreover, both the classical MRAC and the concurrent learning laws require knowledge of the $B$ matrix  (in the standard state space realization).\par
In \cite{Lav, Ma, Han, Som} the uncertainty in the control allocation matrix ($B$ in LTI framework) is dealt with in different ways, however, few results in literature tackle the general case of controlling dynamical systems when the knowledge of input matrix is absent. Some recent results \cite{Giri3},\cite{Bhasin} have attempted to address the limitations in the concurrent learning framework. In \cite{Giri3}, the authors designed a control law without requiring $B$, however, the knowledge of the $A$ matrix and the state derivative is required. The requirement of the state derivative is avoided in \cite{Bhasin} by the introduction of a dynamic state derivative estimator, which leads to a UUB result, while requiring knowledge of the $B$ matrix. \par
The contribution of this paper is to achieve the MRAC goal with parameter convergence, using only state and input data. In this work, the system matrices $A$ and $B$ are considered to be unknown and the state derivative information is also not available. Two memory stacks to store effective past data points are utilized to solve the data-driven MRAC problem, relaxing the assumption of knowledge of the state derivative and the input $B$ matrix. Using sufficient rank conditions on the matrices formed out of the stored data of memory stacks, finite-time identification of the system parameters and subsequent exponential convergence of tracking and controller parameter estimation errors is obtained. The identification method proposed in this work is inspired from \cite{Guay1}, \cite{Guay2}. The finite time identification of system parameters eliminates the need of the computationally burdensome purging algorithm \cite{Giri3}. By introducing an additional gain parameter in the parameter update law, this work further avoids the singular value maximisation algorithm \cite{Giri4}, used to continuously update the history stack for accelerated convergence. Moreover, it is proved that the aforementioned rank condition on the respective matrices merely demands the corresponding signals to be exciting for a finite time interval, which is less restrictive than the PE condition. Unlike PE, the rank condition is required for the matrices formed out of past stored signals and therefore can be verified online.
\section{Classical MRAC}
Consider a continuous-time LTI system given by
\begin{equation} \label{eq:plant}
\dot{x}(t)=Ax(t)+Bu(t)
\end{equation}
where $x(t)\in \mathbb{R}^n$ denotes the state and $u(t)\in \mathbb{R}^d$ denotes the control input to the system and $A\in \mathbb{R}^{n\times n}$, $B \in \mathbb{R}^{n\times d}$ are the system matrices. It is assumed that the pair $(A,B)$ is controllable and that $B$ has full column rank. \footnote{ Typically in physical systems $d\leq n$ and the above mentioned condition is satisfied.}\par
A reference model is chosen as follows to characterise the desired closed loop response of the system \eqref{eq:plant}.
\begin{equation} \label{eq:ref}
\dot{x}_m(t)=A_mx_m(t)+B_mr(t)
\end{equation}
where $A_m \in \mathbb{R}^{n\times n}$ is Hurwitz, $x_m(t)\in \mathbb{R}^n$ is the model state and $r(t)\in \mathbb{R}^d$ denotes a bounded, piecewise continuous reference input signal. An adaptive control law, comprising a linear feedback term and a linear feedforward term, is defined as \cite{An}
\begin{equation} \label{eq:control}
u(t)=K_x^T(t)x(t)+K_r^T(t)r(t)
\end{equation}
where $K_x(t)\in \mathbb{R}^{n \times d}$ and $K_r(t)\in \mathbb{R}^{d \times d}$. Substituting \eqref{eq:control} in \eqref{eq:plant} yields
\begin{equation} \label{eq:plant2}
\dot{x}=(A+BK_x^T)x(t)+BK_r^Tr(t)
\end{equation}
To facilitate the design objective of making system \eqref{eq:plant2} respond as the chosen reference model of \eqref{eq:ref}, the following matching condition is introduced \cite{An}, \cite{Tao}.
\begin{assum}
 There exists $K_x^*\in \mathbb{R}^{n \times d}$ and $K_r^*\in \mathbb{R}^{d \times d}$ such that
\begin{eqnarray} 
A+BK_x^{*T}=A_m \label{eq:match1}\\
BK_r^{*T}=B_m     \label{eq:match2}
\end{eqnarray}
\end{assum}
Using \eqref{eq:match1} and \eqref{eq:match2}, the closed-loop system in \eqref{eq:plant2} can be written as
\begin{equation} \label{eq:plant3}
\dot{x}=A_mx+B_mr+B\tilde{K}_x^Tx+B\tilde{K}_r^Tr
\end{equation}
where $\tilde{K}_x\triangleq K_x-K_x^*$ and $\tilde{K}_r\triangleq K_r-K_r^*$. The tracking error is defined as
\begin{equation} \label{eq:error}
e(t)\triangleq x(t)-x_m(t)
\end{equation}
Using \eqref{eq:ref}, \eqref{eq:plant3} and \eqref{eq:error}, the error dynamics is obtained as
\begin{equation} \label{eq:erdy}
\dot{e}=A_me+B\tilde{K}_x^Tx+B\tilde{K}_r^Tr
\end{equation}
The standard adaptive update laws for $K_x(t)$ and $K_r(t)$ are given as \cite{An}
\begin{eqnarray}
\dot{K}_x=-\Gamma_xxe^TPB \label{eq:gain1}\\
\dot{K}_r=-\Gamma_rre^TPB \label{eq:gain2}
\end{eqnarray}
where $\Gamma_x>0$ and $\Gamma_r>0$ denote positive definite learning rate matrices of appropriate dimension and $P\in \mathbb{R}^{n \times n}$ is a positive definite matrix satisfying the Lyapunov equation 
\begin{equation} \label{eq:Lypunov}
A_m^TP+PA_m+Q=0
\end{equation} 
for any given positive definite $Q\in \mathbb{R}^{n\times n}$.\\
The equations \eqref{eq:control} and \eqref{eq:gain1}-\eqref{eq:gain2} are the classical MRAC laws, which guarantee the tracking error $e(t)\rightarrow 0$ as $t\rightarrow \infty$. However, convergence of the controller parameters $K_x(t)$ and $K_r(t)$ to their true values ($K_x^*$ and $K_r^*$ respectively) is only guaranteed if a restrictive PE condition is satisfied \cite{An}, \cite{Wt}, \cite{Sun}. The persistence of excitation of a vector signal $x(t)$  is defined below \cite{Tao}.\\
\begin{mydef}
 A bounded vector signal $x(t)$ is exciting over an interval $[t,t+T]$ , $T>0$ and $t\geq t_0$ if $\exists$ $\alpha>0$ such that the following condition holds:
\begin{equation} 
\int_{t}^{t+T}x(\tau)x^T(\tau)d\tau\geq \alpha I \nonumber
\end{equation}
 where $I$ denotes an identity matrix.
\end{mydef}
\begin{mydef}
  A bounded vector signal $x(t)$ is persistently exciting (PE) if $\forall t\geq t_0$, $\exists$ $T>0$ and $\alpha >0$ such that:
\begin{equation}
\int_{t}^{t+T}x(\tau)x^T(\tau)d\tau\geq \alpha I \nonumber
\end{equation}
\end{mydef}
According to \cite{Boyd}, for the system states of \eqref{eq:plant} to be PE, the reference signal $r(t)$ must be sufficiently rich i.e. it must contain $n$ distinct frequencies if there are $2n$ unknown parameters. The PE condition is restrictive and difficult to verify online as it relies on the future behaviour of the dynamical systems.
\section{Memory-Based Data-Driven MRAC}
This section proposes a memory-based data-driven architecture for model reference adaptive control and identification of unknown LTI systems with controller parameter convergence. This architecture builds on the concurrent learning technique \cite{Giri2}, \cite{Giri1}, \cite{Giri5}, \cite{Giri6}, \cite{Buss} which utilizes past recorded data concurrently with the current data for adaptation. The concurrent learning method fruitfully utilises a memory stack for storing the state $x(t)$ and the reference signal $r(t)$ at different time points. A full rank condition on the matrices formed out of the memory stack guarantees exponential convergence of both the tracking and the controller parameter estimation errors. However, both classical MRAC and the concurrent learning frameworks assume knowledge of the input $B$ matrix. Concurrent learning, additionally, requires the state derivative information at least at the time instances when the state information is stacked in memory. Although there have been efforts to overcome these pitfalls, the solutions given by \cite{Giri3}, \cite{Bhasin} are partial. \par 
Motivated by the aforementioned limitations, this work achieves the MRAC goal only from input and state data. The system matrices ($A$ and $B$) as well as the state derivative information is considered to be unknown in this framework. The intelligent exploitation of multiple memory stacks results in finite-time identification of system parameters and subsequent exponential convergence of tracking and controller parameter estimation errors to zero. 
\subsection{Finite Time Identification of System Parameters}
The plant dynamics in (1) can be linearly parametrized as
\begin{equation} \label{eq:lp}
\dot{x}=Y(x,u)\theta
\end{equation}
where $Y(x,u)\in \mathbb{R}^{n\times n(n+d)}$ is the regressor matrix and $\theta\in \mathbb{R}^{n(n+d)}$ is a column vector of all the elements of $A$ and $B$ defined as
\[\theta\triangleq\left[\begin{array}{c}
vec(A^T)\\
vec(B^T)\end{array}\right]\]
where $vec(Z)\in \mathbb{R}^{ab}$ denotes the vectorization of a matrix $Z\in \mathbb{R}^{a\times b}$, obtained by stacking the columns of the matrix $Z$. The set of equations required for system parameter identification are described as \cite{Guay1}
\begin{align}
&\dot{\hat{x}}=Y\hat{\theta}+k_m\tilde{x}+m\dot{\hat{\theta}} \label{eq:xhat}\\
&\dot{\hat{\theta}}=k_\theta m^T(\tilde{x}-\gamma) \label{eq:thetahat}\\
&\dot{m}=Y-k_mm, \quad m(t_0)=0 \label{eq:m}
\end{align}
where $m(t)\in \mathbb{R}^{n\times n(n+d)}$ and $k_m$, $k_\theta$  are positive scalar gains and $\tilde{x}(t)$ and $\gamma(t)\in \mathbb{R}^n$ are defined as
\begin{align}
&\tilde{x}\triangleq x-\hat{x} \label{eq:xtilde}\\
&\gamma\triangleq \tilde{x}-m\tilde{\theta} \label{eq:gamma}
\end{align}
where $\tilde{\theta}(t)$ is given by
\begin{equation} \label{eq:thetatilde}
\tilde{\theta}\triangleq \theta-\hat{\theta}
\end{equation}
Differentiating \eqref{eq:gamma} and using \eqref{eq:xhat}-\eqref{eq:xtilde}, the following expression is obtained
\begin{equation} \label{eq:gammadot}
\dot{\gamma}=-k_m\gamma
\end{equation}
with $\gamma(t_0)=\tilde{x}(t_0)$.\\
Although $\gamma(t)$ in \eqref{eq:gamma} is unmeasurable, the use of the initial condition $m(t_0)=0$ in \eqref{eq:m} ensures that $\gamma(t)$ is available online by solving \eqref{eq:gammadot} with known initial condition.\par
An auxiliary variable $g(t) \in \mathbb{R}^n$ is defined as
\begin{equation} \label{eq:g}
g\triangleq m\hat{\theta}+\tilde{x}-\gamma
\end{equation}
Substituting $\gamma$ from \eqref{eq:gamma} in \eqref{eq:g} leads to
\begin{equation} \label{eq:gm}
g(t)=m(t)\theta, \quad\forall t\geq t_0
\end{equation}
The strategic introduction of variables $m(t)$ and $\gamma(t)$ transforms the system in \eqref{eq:lp} to a standard linear regression form of \eqref{eq:gm}, where both $m(t)$ and $g(t)$ are known signals, obviating the need of $\dot{x}(t)$ information.\par
Consider a memory stack $W=\{(m_j,g_j)\}_{j=1}^{p}$ updated online with the signals $m(t)$ and $g(t)$ respectively, where each $m_i$ and $g_i$ are collected and stored in $W$ at $t=t_{w_i}$ with $t_0\geq t_{w_1}>t_{w_2}>....t_{w_p}=t_c$ .\footnote{The data storing mechanism in this work is analogous to that of \cite{Bhasin}, \cite{Giri4} and \cite{Giri6}, hence, the details of this mechanism are omitted here.} Here, $p$ is the memory stack length, which satisfies the condition $p\geq n+d$, where $n$ is the dimension of the state $x(t)$ and $d$ is the dimension of the input $u(t)$.
\begin{assum} \label{as:M}
The matrix $M\triangleq[m_1^T, m_2^T,...m_p^T]^T\in \mathbb{R}^{np\times n(n+d)}$ is full rank i.e. $rank(M)=n(n+d)$.
\end{assum}
This rank condition is analogous to those of CL-based frameworks \cite{Giri1}, \cite{Giri2} of adaptive control. The difference here is that it is stated in terms of a newly introduced variable $m(t)$ required for system identification purpose. Unlike the dependency on future behaviour of signals in PE condition, this rank condition relies on past data. Therefore, this condition can be easily verified online making it more practical as compared to the PE condition. The following Lemma establishes only exciting condition (not PE) of relevant signals as a necessary condition for the Assumption \ref{as:M} to hold.
\begin{lem}
The necessary condition for the matrix $M$ to be full rank is the state $x(t)$ and the input $u(t)$ be exciting over the interval $t\in [t_0,t_c]$ as per Definition 1.\footnote{For proof see Appendix}
\end{lem}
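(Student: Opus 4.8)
The plan is to prove the contrapositive of the stated implication: writing $\phi \triangleq [x^T,\ u^T]^T \in \mathbb{R}^{n+d}$ for the stacked signal, I will show that if $\phi$ fails to be exciting over $[t_0,t_c]$ then $M$ cannot have full column rank. Since ``exciting over a fixed finite interval'' (Definition 1) amounts to positive definiteness of the fixed Gram matrix $G \triangleq \int_{t_0}^{t_c}\phi(\tau)\phi^T(\tau)\,d\tau$ (the bound $G \geq \alpha I$ holds for some $\alpha>0$ exactly when $\lambda_{\min}(G)>0$), lack of excitation means $G$ is singular, i.e.\ there exists a nonzero $v\in\mathbb{R}^{n+d}$ with $v^TGv = \int_{t_0}^{t_c}(v^T\phi(\tau))^2\,d\tau = 0$, and hence $v^T\phi(\tau)=0$ for almost every $\tau\in[t_0,t_c]$.

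The bridge between signal excitation and the rank of $M$ is a structural property of the regressor. From \eqref{eq:lp}, $Y(x,u)\theta = Ax+Bu = [A\ B]\phi$; more generally, identifying any $w\in\mathbb{R}^{n(n+d)}$ with a pair of matrices $(W_A,W_B)$, $W_A\in\mathbb{R}^{n\times n}$, $W_B\in\mathbb{R}^{n\times d}$, through the very same $vec(\cdot^T)$ convention used to build $\theta$, linearity of the parametrization yields $Y(x,u)w = [W_A\ W_B]\phi$. Solving the filter \eqref{eq:m} with $m(t_0)=0$ gives the explicit convolution $m(t)=\int_{t_0}^{t}e^{-k_m(t-\tau)}Y(x(\tau),u(\tau))\,d\tau$, so that for every stored sample $m_j=m(t_{w_j})$ one has $m_j w = \int_{t_0}^{t_{w_j}}e^{-k_m(t_{w_j}-\tau)}\,Y(x(\tau),u(\tau))w\,d\tau$.

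I would then convert the null direction $v$ of $G$ into a null vector of $M$ by a rank-one construction. Fix any nonzero $c\in\mathbb{R}^n$ and define $w$ to be the vector corresponding to $[W_A\ W_B]=cv^T$; this $w$ is nonzero because $c\neq 0$ and $v\neq 0$. Then $Y(x(\tau),u(\tau))w = c\,(v^T\phi(\tau)) = 0$ for almost every $\tau\in[t_0,t_c]$. Because each storage instant satisfies $t_{w_j}\in[t_0,t_c]$, each integral above is taken over a subinterval of $[t_0,t_c]$ on which the integrand vanishes a.e., whence $m_j w = 0$ for all $j=1,\dots,p$, i.e.\ $Mw=0$ with $w\neq 0$. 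Thus $rank(M)<n(n+d)$, contradicting full rank; this establishes the contrapositive and therefore the necessity of the excitation condition.

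The conceptual crux---and the step I expect to carry the weight of the argument---is the rank-one construction $[W_A\ W_B]=cv^T$, which maps a null direction of the finite-horizon signal Gram matrix into a common null vector of all the stored $m_j$. The remaining points are comparatively routine but should be stated with care: the equivalence between Definition-1 excitation on the fixed interval and positive definiteness of $G$, the passage from the vanishing of a nonnegative integral to $v^T\phi=0$ almost everywhere (which is all that is needed, since the $m_j$ enter only through integrals), and the explicit bijection $w\leftrightarrow(W_A,W_B)$ consistent with the $vec(\cdot^T)$ ordering and any commutation reindexing hidden inside $Y$. None of these presents a genuine obstacle, so the proof reduces essentially to the single linear-algebraic observation above.
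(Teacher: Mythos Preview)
Your contrapositive argument is correct: the rank-one lift $[W_A\ W_B]=cv^T$ (equivalently $w=c\otimes v$ under the $vec(\cdot^T)$ ordering) together with the variation-of-constants formula for $m(t)$ does produce a nonzero vector in $\ker M$ whenever the stacked signal $\phi=[x^T\ u^T]^T$ fails to be exciting on $[t_0,t_c]$. The minor technical points you flag (positive definiteness of the finite-horizon Gram versus Definition~1, and the a.e.\ vanishing of $v^T\phi$) are handled adequately.

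The paper takes a different route. It proves, as a general sampling lemma in the Appendix, that if finitely many samples of a continuous matrix-valued signal form a full-rank stack then the signal itself is exciting over the sampling interval; the proof is \emph{direct} rather than by contrapositive, using a mean-value/limit argument to show that $\int_{t_0}^{t_e} x(\tau)x^T(\tau)\,d\tau$ dominates $XX^T>0$. Applied here, this yields ``$M$ full rank $\Rightarrow$ $m(t)$ exciting on $[t_0,t_c]$''; the final passage to $[x^T\ u^T]^T$ being exciting is left implicit and would anyway need exactly the filter relation $m(t)=\int_{t_0}^{t}e^{-k_m(t-\tau)}Y(\tau)\,d\tau$ and Kronecker structure of $Y$ that you exploit. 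In short: the paper's approach yields a reusable lemma (it also drives Lemma~3 for $Z$) at the cost of a slightly heavier limit argument and an unstated bridging step, whereas your approach is shorter and proves the specific claim about $x$ and $u$ in one stroke by exhibiting the null vector explicitly.
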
 

\begin{theorem}
The system parameter identification error $\|\tilde{\theta}(t)\|$ is non-increasing during the interval $[t_0,t_c]$ using the update laws \eqref{eq:xhat}-\eqref{eq:m}. Provided Assumption \ref{as:M} holds, the stored data in $W$ can be used to achieve the finite time identification of system parameters ($\theta$) at $t=t_c$.\end{theorem}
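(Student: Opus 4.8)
The plan is to treat the statement as two essentially independent claims and dispatch each with a short self-contained argument.

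For the monotonicity claim, I would first derive the dynamics of the estimation error. Since $A$ and $B$ are constant, $\theta$ is a constant vector and hence $\dot{\tilde\theta}=-\dot{\hat\theta}$. Substituting the update law \eqref{eq:thetahat} and then replacing $\tilde x-\gamma$ using the definition \eqref{eq:gamma}, which yields the algebraic identity $\tilde x-\gamma = m\tilde\theta$, collapses the error dynamics to the homogeneous linear form $\dot{\tilde\theta}=-k_\theta\,m^Tm\,\tilde\theta$. Taking the candidate $V=\tfrac{1}{2}\tilde\theta^T\tilde\theta$, a direct computation gives $\dot V=-k_\theta\tilde\theta^Tm^Tm\tilde\theta=-k_\theta\|m\tilde\theta\|^2\le 0$, because $k_\theta>0$ and $m^Tm$ is positive semidefinite. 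This shows $V$, and therefore $\|\tilde\theta(t)\|$, is non-increasing on $[t_0,t_c]$, establishing the first assertion without invoking Assumption \ref{as:M}.

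For the finite-time identification claim, I would exploit the exact linear regression $g(t)=m(t)\theta$ from \eqref{eq:gm}, which holds pointwise in $t$ (a consequence of the initialization $m(t_0)=0$ in \eqref{eq:m} that makes $\gamma$ computable online). Evaluating it at each storage instant gives $g_j=m_j\theta$ for $j=1,\dots,p$, and stacking these relations over the memory stack produces $G=M\theta$ with $G\triangleq[g_1^T,\dots,g_p^T]^T$ and $M$ as defined in Assumption \ref{as:M}. Because that assumption guarantees $\mathrm{rank}(M)=n(n+d)$, the Gram matrix $M^TM$ is positive definite and hence invertible, so the true parameter is recovered exactly and in closed form as $\theta=(M^TM)^{-1}M^TG$, computable the moment the stack is fully populated at $t=t_c$.

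The proof is relatively direct, so the main point to get right is conceptual rather than computational: one must recognize that the least-squares reconstruction is \emph{exact}, not approximate. This hinges on each stored pair satisfying the noise-free identity $g_j=m_j\theta$ exactly, which upgrades the over-determined system $G=M\theta$ into one with a unique consistent solution under the rank condition. I would also emphasize that the two claims are decoupled — the monotonicity of $\|\tilde\theta\|$ holds along the adaptive trajectory irrespective of excitation, while it is the rank condition that turns the collected data into exact, finite-time knowledge of $\theta$ at $t=t_c$.
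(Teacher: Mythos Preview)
Your proposal is correct and follows essentially the same approach as the paper: a Lyapunov argument with $V=\tfrac{1}{2}\tilde\theta^T\tilde\theta$ using $\tilde x-\gamma=m\tilde\theta$ to obtain $\dot V=-k_\theta\tilde\theta^Tm^Tm\tilde\theta\le 0$, followed by stacking the exact relations $g_j=m_j\theta$ into $G=M\theta$ and inverting $M^TM$ under the rank assumption. Your write-up is slightly more explicit about the intermediate identity and the exactness of the least-squares reconstruction, but the substance is identical.
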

\begin{proof}
 Consider a Lyapunov candidate as
\begin{equation} \label{eq:Lyaptheta}
V_{\theta}=\frac{1}{2}\tilde{\theta}^T\tilde{\theta}
\end{equation} 
Differentiating \eqref{eq:Lyaptheta} along \eqref{eq:thetahat} and using \eqref{eq:gamma} yields
\begin{equation}
\dot{V}_{\theta}=-k_\theta\tilde{\theta}^Tm^Tm\tilde{\theta}\leq 0
\end{equation}
which implies $\| \tilde{\theta}(t)\|\leq\| \tilde{\theta}(t_0)\|$, $\forall  t\geq t_0$.\\ 
For finite time identification of system parameters, define a matrix $G\triangleq[g_1^T,g_2^T,...g_p^T]^T\in \mathbb{R}^{np\times 1}$.\\
Since \eqref{eq:gm} is valid for $\forall t\geq t_0$, the following equation is satisfied.
\begin{equation}
M\theta=G
\end{equation}
The Assumption \ref{as:M} implies $M^TM$ is an invertible matrix and hence the system parameter $\theta$ can be found from the following least-square like expression.
\begin{equation} \label{eq:intr7}
\hat{\theta}_{FT}(t)=(M^TM)^{-1}M^TG=\theta, \quad t\geq t_c
\end{equation} 
Using \eqref{eq:intr7},at $t=t_c$, finite time convergence of $\theta$ is obtained.\\ 
The identification equations \eqref{eq:xhat}-\eqref{eq:m} and \eqref{eq:g} are merely required for $t\in [t_0,t_c]$. At $t=t_c$, using \eqref{eq:intr7} finite time identification of system parameters is achieved using the memory stack $W$. The finite-time (FT) identifier is given by
\begin{align} \label{eq:FT}
\hat{\theta}_{FT}(t)=
  \begin{cases}
  \hat{\theta}(t) & \text{if}\quad t<t_c\\
  \theta & \text{if} \quad t\geq t_c
  \end{cases}
\end{align}
\end{proof}
The FT identifier method requires instantaneous inversion of $M^TM\in \mathbb{R}^{n(n+d)\times n(n+d)}$ at $t=t_c$, which may be impractical for large dimensional systems. However, the following Lemma shows that the matrix $M^TM$ is significantly sparse and sparsity increases with the state dimension $n$ implying that sophisticated techniques can be applied for fast computation of the inverse.
\begin{lem}
The fraction of non-zero elements in $M^TM$ is $\frac{1}{n}$, where $n$ is the state dimension.
\end{lem}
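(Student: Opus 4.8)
The plan is to trace the sparsity pattern of the regressor $Y(x,u)$ through the filter \eqref{eq:m} into each stored block $m_j$, and then to count the structurally nonzero entries of the Gram matrix $M^TM$. First I would make the sparsity of $Y$ explicit: since $\theta=[\mathrm{vec}(A^T)^T,\ \mathrm{vec}(B^T)^T]^T$ stacks the rows of $A$ and $B$, matching $Y\theta=Ax+Bu$ in \eqref{eq:lp} gives $Y=[\,I_n\otimes x^T,\ I_n\otimes u^T\,]$. Row $i$ therefore carries $x^T$ only in the $i$-th length-$n$ block of its $A$-part and $u^T$ only in the $i$-th length-$d$ block of its $B$-part, so each of the $n$ rows has exactly $n+d$ nonzeros among the $n(n+d)$ columns.

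Next I would show that $m(t)$ inherits this pattern. The filter \eqref{eq:m}, $\dot m=Y-k_m m$ with $m(t_0)=0$, acts entrywise, since the term $-k_m m$ never couples distinct entries; hence any entry that is identically zero in $Y$ stays zero in $m$. Solving entrywise gives $m(t)=[\,I_n\otimes\xi^T(t),\ I_n\otimes\eta^T(t)\,]$, where $\xi(t)=\int_{t_0}^{t}e^{-k_m(t-\tau)}x(\tau)\,d\tau$ and $\eta(t)=\int_{t_0}^{t}e^{-k_m(t-\tau)}u(\tau)\,d\tau$ are the $k_m$-filtered state and input. Consequently every stored block has the form $m_j=[\,I_n\otimes\xi_j^T,\ I_n\otimes\eta_j^T\,]$ with the same block-diagonal sparsity as $Y$.

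Then I would form $M^TM=\sum_{j=1}^{p}m_j^Tm_j$ using the mixed-product rule $(I_n\otimes a)(I_n\otimes b^T)=I_n\otimes(ab^T)$. This produces a $2\times2$ block matrix whose blocks are $I_n\otimes(\xi_j\xi_j^T)$, $I_n\otimes(\xi_j\eta_j^T)$, $I_n\otimes(\eta_j\xi_j^T)$ and $I_n\otimes(\eta_j\eta_j^T)$ summed over $j$; equivalently $M^TM$ is a permutation of $I_n\otimes S$ with $S=\sum_j[\xi_j^T,\eta_j^T]^T[\xi_j^T,\eta_j^T]\in\mathbb{R}^{(n+d)\times(n+d)}$. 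Each Kronecker block $I_n\otimes(\cdot)$ is block-diagonal with $n$ generically dense sub-blocks, so $M^TM$ reduces to $n$ dense diagonal blocks of size $(n+d)\times(n+d)$ and vanishes elsewhere.

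The count is then immediate: the nonzeros number $n(n+d)^2$ against the $[n(n+d)]^2=n^2(n+d)^2$ total entries, giving the fraction $\tfrac1n$. I expect the main obstacle to be the middle step, namely verifying rigorously that the entrywise filter preserves the exact zero pattern and that $m(t)$ factors in the stated Kronecker form, together with identifying that $m_j^Tm_j$ carries the block-diagonal structure of $I_n\otimes S$; once this structure is established the enumeration of nonzeros is routine.
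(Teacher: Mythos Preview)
Your argument is correct. The key observations---that $Y(x,u)=[\,I_n\otimes x^T,\ I_n\otimes u^T\,]$ from the definition of $\theta$, that the scalar filter \eqref{eq:m} acts entrywise and hence $m(t)=[\,I_n\otimes \xi^T(t),\ I_n\otimes \eta^T(t)\,]$ with the same Kronecker pattern, and that the mixed-product rule then forces $M^TM$ to consist of four blocks of the form $I_n\otimes(\cdot)$---are all valid, and the final count $n(n+d)^2/[n(n+d)]^2=1/n$ follows immediately. Your remark that the $2\times 2$ block arrangement is a permutation similarity of $I_n\otimes S$ is also right (reorder $\theta$ row-by-row rather than $A$-part/$B$-part), though it is not needed for the count.

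As for comparison: the paper states Lemma~2 without proof, so there is no argument in the text to compare against. Your derivation is the natural one and would serve as the missing proof. Two small points worth tightening if you write it up: (i) the exact fraction $1/n$ refers to the \emph{structural} nonzeros---accidental cancellations in $S=\sum_j[\xi_j;\eta_j][\xi_j;\eta_j]^T$ could in principle lower the count, so a brief genericity remark (which you already hint at) is appropriate; (ii) the symbol $\xi$ is used later in the paper for the stacked error $[e^T,\tilde\phi^T]^T$, so you may want a different letter for the filtered state to avoid a clash.
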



\subsection{Tracking Error and Controller Parameter Convergence}
The control law in \eqref{eq:control} can be linearly parametrized as 
\begin{equation} \label{eq:lp2}
u=z(x,r)\phi(t)
\end{equation}
where $z(x,r)\in\mathbb{R}^{d\times d(n+d)}$ and $\phi\in \mathbb{R}^{d(n+d)}$ is a column vector consisting of all the elements of $K_x(t)$ and $K_r(t)$ defined as
\[\phi\triangleq\left[\begin{array}{c}
vec(K_x)\\
vec(K_r)\end{array}\right]\]
Using \eqref{eq:lp2}, the error dynamics in \eqref{eq:erdy} can be written as
\begin{equation} \label{eq:erdy2}
\dot{e}=A_me+Bz\tilde{\phi}
\end{equation}
where $\tilde{\phi}=\phi-\phi^*$ and
\[\phi^*\triangleq\left[\begin{array}{c}
vec(K_x^*)\\
vec(K_r^*)\end{array}\right]\]
For the convergence of tracking and controller parameter estimation errors, another memory stack $H=\{x_j, r_j\}_{j=1}^q$ is populated online by state and input signals, respectively where each $x_j$ and $r_j$ is stored at $t=t_{h_j}$ with $t_0\geq t_{h_1}>t_{h_2}>..>t_{z_q}=t_s$ . Each pair $(x_j, r_j)$ is referred to as a data point and $q$ is the length of the stack satisfying $q\geq n\geq d$.
\begin{assum} \label{as:Z}
The matrix $Z=[z_1^T, z_2^T,...,z_q^T]^T$ is full rank i.e. $rank(Z)=d(n+d)$, where $z_j=z(x_j, r_j)$ for $j=1(1)q$. 
\end{assum}
This assumption is analogous to Assumption \ref{as:M}. Here it is stated in terms of $z(t)$, which is relevant to controller parameter convergence. The following Lemma provides a necessary condition for the Assumption \ref{as:Z} to hold.
\begin{lem}
The necessary condition for the matrix $Z$ to be full rank is the state $x(t)$ and the reference input $r(t)$ be exciting over the interval $t\in [t_0,t_s]$ as per Definition 1. 
\end{lem}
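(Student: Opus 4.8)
My plan is to prove the contrapositive: assuming the excitation condition of Definition 1 fails for $x(t)$ or for $r(t)$ on the relevant interval, I would construct a nonzero vector in the null space of $Z$, thereby contradicting $rank(Z)=d(n+d)$. The argument rests on the defining property of the parametrization \eqref{eq:lp2}. Since $z(x,r)$ depends only on $(x,r)$ and the relation $u=z(x,r)\phi$ holds for \emph{every} admissible controller vector, for any coefficient vector $\phi_0=[vec(\Phi_x)^T\ vec(\Phi_r)^T]^T$ reshaped into $\Phi_x\in\mathbb{R}^{n\times d}$ and $\Phi_r\in\mathbb{R}^{d\times d}$ one has $z(x_j,r_j)\phi_0=\Phi_x^Tx_j+\Phi_r^Tr_j$. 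Consequently $Z\phi_0=0$ holds if and only if $\Phi_x^Tx_j+\Phi_r^Tr_j=0$ for every stored index $j=1,\dots,q$.

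First I would reduce full column rank of $Z$ to a spanning property of the stored data. Writing $\xi_j\triangleq[x_j^T\ r_j^T]^T\in\mathbb{R}^{n+d}$ and $N\triangleq[\Phi_x^T\ \Phi_r^T]\in\mathbb{R}^{d\times(n+d)}$, the identity above reads $Z\phi_0=0\iff N\xi_j=0$ for all $j$. A nonzero $N$ with this property exists precisely when the vectors $\{\xi_j\}_{j=1}^{q}$ fail to span $\mathbb{R}^{n+d}$, equivalently when $\sum_{j=1}^{q}\xi_j\xi_j^T$ is singular. Hence $rank(Z)=d(n+d)$ if and only if $\sum_{j=1}^{q}\xi_j\xi_j^T\succ 0$; in particular its diagonal blocks $\sum_j x_jx_j^T$ and $\sum_j r_jr_j^T$, being principal submatrices, must both be positive definite.

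Next I would link this discrete condition at the sampling instants to the integral excitation condition. Suppose for contradiction that $x(t)$ is not exciting on the interval. By Definition 1 this means $\int x(\tau)x^T(\tau)\,d\tau$ admits no lower bound $\alpha I>0$; being positive semidefinite it is then singular, so there is a unit vector $v\in\mathbb{R}^n$ with $\int (v^Tx(\tau))^2\,d\tau=0$. Since the integrand is nonnegative and $x(t)$ is continuous, this forces $v^Tx(\tau)=0$ throughout the interval, and in particular $v^Tx_j=0$ at every sampling instant $t_{h_j}$ lying inside it. Choosing $\Phi_x$ with first row $v^T$ and all other entries zero and $\Phi_r=0$ yields a nonzero $\phi_0$ with $N\xi_j=\Phi_x^Tx_j=0$ for all $j$, so $Z\phi_0=0$ and $Z$ is rank deficient. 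The identical argument with the roles of $x$ and $r$ interchanged handles the case where $r(t)$ is not exciting. Either way $rank(Z)<d(n+d)$, contradicting Assumption \ref{as:Z}; therefore both $x(t)$ and $r(t)$ must be exciting on $[t_0,t_s]$.

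The main obstacle is this last bridging step between the continuous-time integral of Definition 1 and the finite sum over the stored samples: continuity of the signals is essential to pass from ``the integral of $(v^Tx)^2$ vanishes'' to ``$v^Tx$ vanishes at the particular instants $t_{h_j}$.'' Extra care is needed because $r(t)$ is only piecewise continuous, so I would restrict attention to its continuity points, noting that the sampling instants can be chosen among them; and I would check that the reshaping between $\phi_0$ and $(\Phi_x,\Phi_r)$ respects the $vec$ ordering fixing $\phi$ in \eqref{eq:lp2}. The remaining step, the equivalence between existence of a nonzero annihilator $N$ and singularity of $\sum_j\xi_j\xi_j^T$, is routine linear algebra and mirrors the proof of the analogous statement for $M$.
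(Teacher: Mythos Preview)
Your argument is correct and takes a genuinely different route from the paper. The paper does not prove this lemma directly; it relies on a single general result in the Appendix (Lemma~4 there), which shows, by a direct argument, that if a stacked matrix $X=[x_1,\dots,x_p]$ of samples of a continuous matrix-valued signal is full rank, then $\int_{t_0}^{t_e} x(\tau)x^T(\tau)\,d\tau\succ 0$. That lemma is then invoked for both the $M$- and the $Z$-statements. Its proof splits the integral into neighbourhoods of the sample instants and their complements and uses the averaging identity $\lim_{\epsilon\to 0^+}\frac{1}{2\epsilon}\int_{a-\epsilon}^{a+\epsilon}f=f(a)$ to extract the discrete Gram matrix $XX^T$, whose positive definiteness forces $C\succ 0$. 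Your contrapositive is more elementary: from non-excitation you pass, via continuity, to a common annihilator $v$ of all samples and then build an explicit null vector of $Z$ through the block structure of $z(x,r)$. This has two advantages: it avoids the somewhat delicate $\epsilon$-limit argument, and it makes the separate conclusions for $x(t)$ and $r(t)$ explicit, a step the paper leaves implicit when applying its appendix lemma to the matrix regressor $z(t)$. One small slip: in your construction you should set the first \emph{column} of $\Phi_x\in\mathbb{R}^{n\times d}$ equal to $v$ (equivalently the first row of $\Phi_x^T$), not the first row of $\Phi_x$, so that $\Phi_x^Tx_j=[v^Tx_j,0,\dots,0]^T=0$; the dimensions do not match as written, but the idea is unaffected.
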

The controller parameter $\phi(t)$ is updated as
\begin{align} \label{eq:phi}
\dot{\phi}=
  \begin{cases}
  proj(-\Gamma_{\phi}z^T\hat{B}^TPe) \quad \quad \text{for}\quad t_0\leq t\leq t_c\\
  -\Gamma_{\phi}z^T\hat{B}^TPe \quad \quad \quad \text{for} \quad t_c < t < t_m\\
  -\Gamma_{\phi}\left(z^T\hat{B}^TPe+k_{\phi}\sum_{j=1}^{q}z_j^T\epsilon_{K_j}\right) \quad \text{for} \quad t\geq t_m
  \end{cases}
\end{align}
where $\Gamma_\phi\in\mathbb{R}^{d(n+d)\times d(n+d)}$ is a positive definite learning rate matrix, $proj$ denotes projection operator \cite{Kr} which ensures parameter boundedness within a convex region in the parameter space and $t_m=max(t_c, t_s)$. Further, $k_{\phi}$ is a scalar gain introduced to alter the rate of convergence and  $\hat{B}(t)$ is extracted from $\hat{\theta}_{FT}(t)$ of \eqref{eq:FT}. The error variable $\epsilon_{K_j}(t)$ in \eqref{eq:phi} is defined as
\begin{equation} \label{eq:ek}
\epsilon_{K_j}(t)=\epsilon_{K_{x_j}}(t)+\epsilon_{K_{r_j}}(t)
\end{equation}
with the following two expressions.
\begin{align}
&\epsilon_{K_{x_j}}(t)\triangleq (\hat{B}^T\hat{B})^{-1}\hat{B}^T(\hat{\dot{x}}_j-A_mx_j-B_mr_j
-\hat{B} \epsilon_{K_{r_j}}(t)) \label{eq:ex}\\
&\epsilon_{K_{r_j}}(t)\triangleq K_r^T(t)r_j-(\hat{B}^T\hat{B})^{-1}\hat{B}^TB_mr_j \label{eq:er}
\end{align} 
where $\hat{\dot{x}}_j=Y_j\hat{\theta}_{FT}$. \\
\begin{theorem}
For the system \eqref{eq:plant}, the control law in \eqref{eq:lp2} and the update laws in \eqref{eq:phi} along with the finite-time system identifier $\hat{\theta}_{FT}(t)$ \eqref{eq:FT} ensure boundedness of the tracking and the controller parameter estimation errors for $t\in [t_0, t_m)$ and the global exponential convergence of those errors to zero is guaranteed for $t\geq t_m$, provided the Assumption \ref{as:Z} is satisfied.
\end{theorem}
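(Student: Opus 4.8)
The plan is to carry a single composite Lyapunov function $V=e^{T}Pe+\tilde{\phi}^{T}\Gamma_{\phi}^{-1}\tilde{\phi}$ through the three branches of \eqref{eq:phi}, exploiting the fact that Theorem~1 delivers the exact plant matrices for $t\ge t_{c}$. The pivotal step, which I would establish before any Lyapunov bookkeeping, is that the auxiliary error $\epsilon_{K_{j}}$ is exactly the regressor-weighted parameter error $z_{j}\tilde{\phi}$ once identification has converged; given this, the concurrent-learning sum becomes $\sum_{j}z_{j}^{T}\epsilon_{K_{j}}=\big(\sum_{j}z_{j}^{T}z_{j}\big)\tilde{\phi}=Z^{T}Z\,\tilde{\phi}$, and Assumption~\ref{as:Z} renders $Z^{T}Z$ positive definite, which is precisely the ingredient needed for exponential decay.

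I would first prove the identity $\epsilon_{K_{j}}=z_{j}\tilde{\phi}$ for $t\ge t_{c}$. By Theorem~1, $\hat{\theta}_{FT}=\theta$ on this interval, so $\hat{B}=B$ and $\hat{\dot{x}}_{j}=Y_{j}\theta=Ax_{j}+Bu_{j}$ with $u_{j}=K_{x}^{T}(t)x_{j}+K_{r}^{T}(t)r_{j}=z_{j}\phi$ the present control evaluated on the stored pair. Substituting the matching conditions \eqref{eq:match1}--\eqref{eq:match2} into \eqref{eq:er} gives $\epsilon_{K_{r_{j}}}=\tilde{K}_{r}^{T}r_{j}$, since $(B^{T}B)^{-1}B^{T}B_{m}=K_{r}^{*T}$; feeding this back into \eqref{eq:ex}, the left-inverse $(\hat{B}^{T}\hat{B})^{-1}\hat{B}^{T}$ collapses the $B$-weighted state term to $\tilde{K}_{x}^{T}x_{j}$ while the feedforward pieces cancel, yielding $\epsilon_{K_{x_{j}}}=\tilde{K}_{x}^{T}x_{j}$. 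Hence $\epsilon_{K_{j}}=\tilde{K}_{x}^{T}x_{j}+\tilde{K}_{r}^{T}r_{j}=z_{j}\tilde{\phi}$. This algebraic cancellation is the main obstacle: it is what legitimises the concurrent-learning term and it depends delicately on the finite-time identifier simultaneously supplying $\hat{B}=B$ and the exact synthetic velocity $\hat{\dot{x}}_{j}$.

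Next I would establish boundedness on the finite interval $[t_{0},t_{m})$. On $[t_{0},t_{c}]$ the projection operator confines $\phi$, and therefore $\tilde{K}_{x},\tilde{K}_{r}$, to a fixed convex set; writing $z\tilde{\phi}=\tilde{K}_{x}^{T}x+\tilde{K}_{r}^{T}r$ and $x=e+x_{m}$ recasts \eqref{eq:erdy2} as the linear time-varying system $\dot{e}=(A_{m}+B\tilde{K}_{x}^{T})e+(B\tilde{K}_{x}^{T}x_{m}+B\tilde{K}_{r}^{T}r)$ whose coefficients and forcing are bounded and continuous (using that $x_{m},r$ are bounded). A Gronwall estimate then rules out finite escape, so $e$ remains bounded over this compact interval even though $\hat{B}\neq B$ prevents the cross terms from cancelling. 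On $(t_{c},t_{m})$ we have $\hat{B}=B$ and the update $\dot{\phi}=-\Gamma_{\phi}z^{T}B^{T}Pe$; differentiating $V$ along \eqref{eq:erdy2} and using \eqref{eq:Lypunov}, the indefinite term $2e^{T}PBz\tilde{\phi}$ is exactly annihilated by $-2\tilde{\phi}^{T}z^{T}B^{T}Pe$, leaving the classical estimate $\dot{V}=-e^{T}Qe\le0$, which bounds $e$ and $\tilde{\phi}$. Since $e$ and $\phi$ are continuous across the switching instants, $V(t_{m})$ is finite.

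Finally, for $t\ge t_{m}$ (where $\hat{B}=B$ and Assumption~\ref{as:Z} is active) I would differentiate $V$ along \eqref{eq:erdy2} and the third branch of \eqref{eq:phi}. The cross term cancels as before, and using $\sum_{j}z_{j}^{T}\epsilon_{K_{j}}=Z^{T}Z\tilde{\phi}$ one obtains
\begin{equation}
\dot{V}=-e^{T}Qe-2k_{\phi}\tilde{\phi}^{T}Z^{T}Z\tilde{\phi}. \nonumber
\end{equation}
With $Q>0$ and $Z^{T}Z>0$ this is bounded above by $-\lambda_{\min}(Q)\|e\|^{2}-2k_{\phi}\lambda_{\min}(Z^{T}Z)\|\tilde{\phi}\|^{2}$, and comparing against the quadratic bounds on $V$ yields $\dot{V}\le-\alpha V$ with $\alpha=\min\{\lambda_{\min}(Q)/\lambda_{\max}(P),\,2k_{\phi}\lambda_{\min}(Z^{T}Z)/\lambda_{\max}(\Gamma_{\phi}^{-1})\}>0$. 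Hence $V(t)\le V(t_{m})e^{-\alpha(t-t_{m})}$, which, since $V$ is radially unbounded in $(e,\tilde{\phi})$, gives the claimed global exponential convergence of both $e(t)$ and $\tilde{\phi}(t)$ to zero for $t\ge t_{m}$.
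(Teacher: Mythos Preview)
Your argument is correct and follows essentially the same route as the paper: the same composite Lyapunov function, the same key identity $\epsilon_{K_{j}}=z_{j}\tilde{\phi}$ once $\hat{\theta}_{FT}=\theta$, the same cross-term cancellation on $(t_{c},t_{m})$ yielding $\dot V=-e^{T}Qe$, and the same use of $Z^{T}Z>0$ from Assumption~\ref{as:Z} to obtain $\dot V\le-\alpha V$ for $t\ge t_{m}$. The only notable difference is on $[t_{0},t_{c}]$: the paper computes $\dot V_{\xi}\le -\tfrac12\beta_{1}\|e\|^{2}+\beta_{2}\|e\|$ and then splits into three cases according to the sign of $\beta_{1}(t)$, extracting an explicit UUB bound when $\beta_{1}>0$ and otherwise falling back on a no-finite-escape argument for the closed-loop LTV system; you go straight to the Gronwall/LTV argument, which is exactly the paper's Case~1 reasoning and already suffices for the boundedness claim in the theorem.
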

\begin{proof}
Consider the following Lyapunov candidate
\begin{equation} \label{eq:Lyphi}
V_{\xi}=\frac{1}{2}\xi^T\Lambda \xi
\end{equation}
where $\xi(t)\triangleq [e^T(t), \phi^T(t)]^T$ and 
\[\Lambda\triangleq\left[\begin{array}{cc}
P & 0_{n\times d(n+d)}\\
0_{d(n+d)\times n} & \Gamma_{\phi}^{-1}\end{array}\right]\]
Taking time derivative of \eqref{eq:Lyphi} along the trajectories of \eqref{eq:erdy2} and \eqref{eq:phi} the following expression is obtained during $t\in [t_0,t_c]$
\begin{equation}
\dot{V}_{\xi}\leq -\frac{1}{2}e^TQe+e^TPBZ\tilde{\phi}-\tilde{\phi}^Tz^T\hat{B}^TPe
\end{equation} 
The inequality occurs due to the use of projection operator (For details see \cite{Lav2}). The above inequality can be further modified as
\begin{equation}
\dot{V}_{\xi}\leq -\frac{1}{2}e^TQe+e^TP\tilde{B}Z\tilde{\phi}
\end{equation}
where $\tilde{B}\triangleq B-\hat{B}$. Using the fact $Z\tilde{\phi}=\tilde{K_x}^Tx+\tilde{K_r}^Tr$, yields
\begin{equation} \label{eq:intr}
\dot{V}_{\xi}\leq -\frac{1}{2}\beta_1\|e\|^2+\beta_2\|e\|
\end{equation}
where
\begin{align}
&\beta_1=\lambda_{min}(Q)-2\|P\|\|\tilde{B}\|\|\tilde{K}_x\| \label{eq:beta1}\\
&\beta_2=\|P\|\|\tilde{B}\|(\|\tilde{K}_x\|\|x_m\|+\|\tilde{K}_r\|\|r\|)\label{eq:beta2}
\end{align}
$\lambda_{min}(.)$, in \eqref{eq:beta1}, denotes the minimum eigen value of the corresponding argument matrix. In \eqref{eq:beta2}, $\|\tilde{K}_x(t)\|$ and $\|\tilde{K}_r(t)\|$ are bounded by the use of projection operator \cite{Kr} in \eqref{eq:phi} and $\|\tilde{B}(t)\|\in\mathcal{L}_\infty$ from Theorem 1. As $r(t)\in\mathcal{L}_\infty$ and $A_m$ is Hurwitz by definition, $x_m(t)\in\mathcal{L}_\infty$, implying $\beta_1(t)\in\mathcal{L}_\infty$ and $\beta_2(t)\in\mathcal{L}_\infty$ with $\beta_2>0$. However, the sign of $\beta_1$ is uncertain during $t\in[t_0, t_c]$. It can be inferred that once $\beta_1(t)$ becomes greater than zero, it will remain greater than zero as $\|\tilde{\theta}(t)\|$ is non-increasing in the interval $[t_0,t_c]$ as per Theorem 1. Moreover, $\beta_1=\lambda_{min}(Q)>0$ at $t=t_c$ as $\|\tilde{\theta}(t)\|=0$ at $t=t_c$. Three cases are possible depending on the $\tilde{\theta}(t)$ dynamics.\\
\textbf{case 1}: $\beta_1(t)\leq 0$, $\forall t\in[t_0, t_c)$\\
\textbf{case 2:} $\beta_1(t)>0$, $\forall t\in[t_0, t_c]$\\
\textbf{case 3:} $\beta_1(t)\leq 0$, $\forall t\in[t_0, t_f]$ and $\beta_1(t)>0$, $\forall t\in(t_f,t_c]$ \\
The three cases are analysed separately as follows.\par
\textbf{case 1:} As $\beta_1(t)\leq 0$, it is hard to comment on the exact bound of the tracking error. However, finite tracking error can be claimed from \eqref{eq:plant2}, which can be expressed as $\dot{x}=\bar{A}(t)x(t)+\bar{g}(t)$, where $\bar{A}(t)=A+BK_x^T(t)$ and $\bar{g}(t)=BK_r^T(t))r(t)$. As equation \eqref{eq:plant2} is a linear equation in $x(t)$ with $\|\bar{A}(t)\|$ and $\|\bar{g}(t)\|$ are bounded i.e. $\bar{A}(t)\in\mathcal{L}_\infty$ and $\bar{g}(t)\in\mathcal{L}_\infty$ in the finite time interval $[t_0, t_c]$, using Global Existence and Uniqueness theorem \cite{Khalil} it can be argued that $x(t)$ cannot have a finite escape time. Therefore $x(t)\in\mathcal{L}_\infty$ in finite time ($t\leq t_c$) if $x(t_0)$ is finite, leading to the tracking error $e(t)\in\mathcal{L}_\infty$ as $x_m(t)\in\mathcal{L}_\infty$, implying $\xi(t)\in\mathcal{L}_\infty$.\par 
\textbf{case 2:} As $\beta_1(t)>0$, \eqref{eq:intr} can be further modified to
\begin{equation} \label{eq:intr2}
\dot{V}_{\xi}\leq -\frac{1}{2} \beta_{11}\|e\|^2+\frac{\beta_{2,max}^2}{4\beta_{12}}
\end{equation}
where $\beta_{11}+\beta_{12}=\beta_{1,mi n}>0$ with $\beta_{11}>0$, $\beta_{12}>0$ and $\beta_{2,max}$ can be found by upper bounding every time-varying term in \eqref{eq:beta2}. Due to the use of $proj$ in \eqref{eq:phi} during $t\in[t_0, t_c]$, $V_{\xi}$ in \eqref{eq:Lyphi} can be upper bounded as
\begin{equation} \label{eq:intr3}
V_{\xi}\leq\frac{1}{2}\lambda_{max}(P)\|e\|^2+D
\end{equation}
where $D=\frac{1}{2}\lambda_{min}(\Gamma_{\phi})\|\phi\|_{max}^2$.\footnote{$\|\phi\|_{max}^2$ is defined by the convex region in the projection operator.} Further upper bounding \eqref{eq:intr2} using \eqref{eq:intr3} yields
\begin{equation} 
\dot{V}_{\xi}\leq -\eta_1V_{\xi}+\eta_2
\end{equation}
where $\eta_1=\frac{\beta_{11}}{\lambda_{max}(P)}$ and $\eta_2=\frac{\beta_{2,max}^2}{4\beta_{12}}+\frac{\beta_{11}D}{\lambda_{max}(P)}$. Using comparison Lemma \cite{Khalil}, the above differential inequality results in the following UUB condition.
\begin{equation} \label{eq:uub}
V_{\xi}(t)\leq \left(V_{\xi}(t_0)-\frac{\eta_2}{\eta_1}\right)exp    \left(-\eta_1(t-t_0)\right)+\frac{\eta_2}{\eta_1}, \forall t\in[t_0,t_c]
\end{equation}
The inequality in \eqref{eq:uub} implies $\xi(t)\in\mathcal{L}_\infty$ during the same interval via Theorem 4.18 of \cite{Khalil}.\\
The analysis done in case 1 holds independent of the sign of $\beta_1(t)$. However, in case of $\beta_1(t)>0$, it is possible to get an exact expression of bound for the Lyapunov function $V_\xi(t)$ as shown in \eqref{eq:uub}.\par 
\textbf{case 3:} During $t\in[t_0, t_f+\epsilon)$, following the arguments similar to case 1, it can be established that $\xi(t)\in\mathcal{L}_\infty$, where $\epsilon>0$ is  infinitesimally small. Further, during $t\in[t_f+\epsilon, t_c]$, following the analysis similar to case 2, the following bound can be derived. 
\begin{equation} \label{eq:uub2}
V_{\xi}(t)\leq \left(V_{\xi}(t_\beta)-\frac{\eta_2}{\eta_1}\right)exp    \left(-\eta_1(t-t_\beta)\right)+\frac{\eta_2}{\eta_1}, \forall t\in[t_\beta,t_c]
\end{equation}
with $t_\beta=t_f+\epsilon$, implying $\xi(t)\in\mathcal{L}_\infty$ in the same interval.\\
In the interval $t\in (t_c,t_m)$ using $\tilde{B}=0$ from Theorem 1 in the time derivative of \eqref{eq:Lyphi} along \eqref{eq:erdy2} and \eqref{eq:phi}
\begin{equation}
\dot{V}_{\xi}=-\frac{1}{2}e^TQe\leq 0
\end{equation}
Thus $V_{\xi}(t)$ is non-increasing in this interval, implying $V_{\xi}(t_m)\leq V_{\xi}(t_c)$. Again, using Theorem 4.18 of \cite{Khalil} it can be inferred that $\xi(t)\in\mathcal{L}_\infty$ during $t\in(t_c, t_m)$ as $V_\xi(t)\in\mathcal{L}_\infty$.\\
For $t\geq t_m$, using $\hat{\theta}_{FT}=\theta$ from Theorem 1, the time derivative of \eqref{eq:Lyphi} along \eqref{eq:erdy2} and \eqref{eq:phi} can be expressed as
\begin{align} \label{eq:intr4}
&\dot{V}_{\xi}=-\frac{1}{2}e^TQe+e^TPBz\tilde{\phi}\nonumber\\
&-\tilde{\phi}^Tz^TB^TPe-\tilde{\phi}^T\left(k_{\phi}\sum_{j=1}^qz_j^Tz_j\right)\tilde{\phi}
\end{align}
Using \eqref{eq:match2} and \eqref{eq:er}, $\epsilon^T_{K_{r_j}}(t)$ can be expressed as
\begin{equation} \label{eq:e1}
\epsilon_{K_{r_j}}(t)=\tilde{K}_r^T(t)r_j
\end{equation}
and using \eqref{eq:plant3}, \eqref{eq:ex} and \eqref{eq:e} it can be shown that
\begin{equation} \label{eq:e2}
\epsilon_{K_{x_j}}(t)=\tilde{K}_x^T(t)x_j
\end{equation}
Further \eqref{eq:ek}, \eqref{eq:e1} and \eqref{eq:e2} lead to 
\begin{equation} \label{eq:e}
\epsilon_{K_j}(t)=z_j\tilde{\phi}(t)
\end{equation}
The expression \eqref{eq:e} is used to derive \eqref{eq:intr4}, which can be further upper bounded as
\begin{equation} \label{eq:intr5}
\dot{V}_{\xi}\leq -\frac{1}{2}\lambda_{min}(Q)\|e\|^2-k_{\phi}\lambda_{min}(\Omega_{z})\|\tilde{\phi}\|^2
\end{equation}
where 
\begin{equation}
\Omega_z=\sum_{j=1}^qz_j^Tz_j=Z^TZ
\end{equation}
Hence, based on Assumption \ref{as:Z}, $\Omega_z>0$ which implies $\dot{V}_{\xi}\leq 0$. Further from \eqref{eq:intr5} the following bound can be obtained
\begin{equation} \label{eq:intr6}
\dot{V}_{\xi}\leq -\beta V_{\xi}
\end{equation}
where $\beta$ is given by
\begin{equation}
\beta=\frac{min(\lambda_{min}(Q),2k_{\phi}\lambda_{min}(\Omega_z))}{max(\lambda_{max}(P),\lambda_{max}(\Gamma_{\phi}^{-1}))} \nonumber
\end{equation}
The differential inequality in \eqref{eq:intr6} leads to the subsequent exponentially convergent bound
\begin{equation}
V_{\xi}(t)\leq V_{\xi}(t_m)e^{-\beta(t-t_m)}, \forall t\in[t_m, \infty)
\end{equation}
implying $\xi(t)\to 0$ exponentially fast as $t\to \infty$. Further, the Lyapunov function in \eqref{eq:Lyphi} is radially unbounded and no restriction is imposed on $V_\xi(t_m)$, implying global exponential stability (GES).
\end{proof}
\begin{remark}
As described by \eqref{eq:phi}, the update law of $\phi(t)$ follows time-dependent switching with at most 2 (finite) switching instances ($t_c$ and $t_m$). Therefore, the boundedness of $\xi(t)$ during $t\in[t_0,t_m)$ and exponential convergence for $t\geq t_m$ suffices the analysis.
\end{remark}
\begin{remark}
To improve the rate of convergence, the memory stack is continuously updated in \cite{Giri4} using an algorithm to maximize the minimum singular value of a matrix analogous to $Z$. To avoid computational burden associated with the continuous stack update, the proposed algorithm updates the stack until the sufficient rank condition is satisfied. The speed of convergence is controlled by appropriately choosing $Q$, $k_{\phi}$ and $\Gamma_{\phi}$.  Moreover, since $\theta$ is obtained in finite time using \eqref{eq:xhat}-\eqref{eq:gm} and the memory stack $W$, the need for computationally involved purging algorithm \cite{Giri3} is obviated.
\end{remark}
\begin{remark}
The proposed memory-based data-driven technique for parameter convergence is similar to classical integral control in the following sense. It is well-known that integral control, which captures the effect of entire past of the relevant signal, reduces steady state error. The proposed approach also stores information-rich past data (although not the entire past) in the memory stack using a non-linear sampling technique and utilizes the stacked data in the parameter update law, leading to exponential convergence of parameter estimation error to zero. Future research can be carried out to investigate the relation between the proposed method and the classical integral control. 
\end{remark}
\section{Simulation Results}
To demonstrate the effectiveness of the proposed data driven technique, a second order linear plant is considered. 
\[A=\left[\begin{array}{cc}
0 & 1 \\
5 & 2 \end{array} \right]
B=\left[\begin{array}{c}
0 \\
2 \end{array} \right] \] 
The reference model matrices are considered as
\[A_m=\left[\begin{array}{cc}
0 & 1 \\
-8 & -10 \end{array} \right]
B_m=\left[\begin{array}{c}
0 \\
1 \end{array} \right] \] 
Note that $A_m$ is a Hurwitz matrix but $A$ is not. Using the matching conditions \eqref{eq:match1} and \eqref{eq:match2} $K_x^*$ is equal to $[-6.5, -6]^T$ and $K_r^*$ is $0.5$. The reference signal $r(t)$ is chosen  as
\begin{equation}
r(t)=20e^{-t/2} \nonumber
\end{equation} 
which is a non-PE signal. The matrix $Q$ of Lyapunov equation (12) is selected as
\[Q=\left[\begin{array}{cc}
5 & 0\\
0 & 5\end{array}\right]\]
The adaptation gains $\Gamma_x$ and $\Gamma_r$ are chosen as $\Gamma_x=I_2$ and $\Gamma_r=I_1$. The gain parameters are chosen as $k_{\theta}=80$, $k_m=10$ and $k_{\phi}=40$. \par
The plot of the error dynamics of system parameters is shown in Fig 1. The tracking error plot is shown in Fig 2, depicting the convergence of errors to zero within approximately 4.1 seconds. Fig 3. shows the evolution of estimation error in controller parameters. At $t=t_m$ the transition of parameter update law from one rule to another leads to non-differentiability at that time point. After $t=t_m$, the error dynamics converge to zero exponentially.
\begin{figure}[!t]
   \includegraphics[width=3 in, height=2 in]{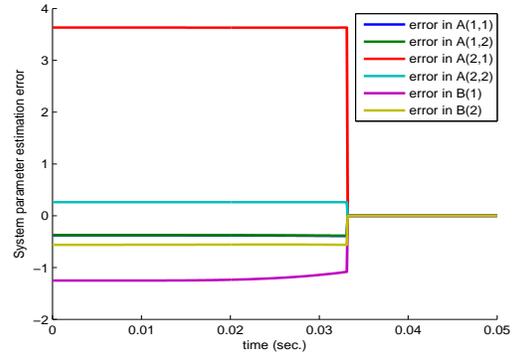}
   \caption{System parameter estimation error $\tilde{\theta}(t)$}
\end{figure}
\begin{figure}[!t]
   \includegraphics[width=3 in, height=2 in]{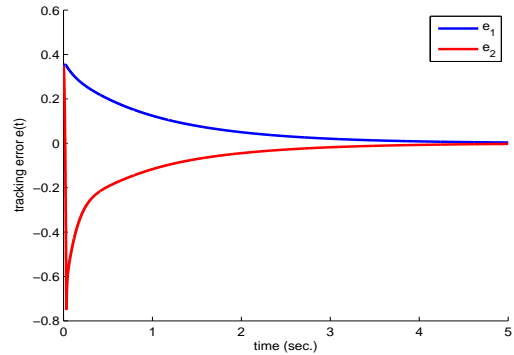}
   \caption{tracking error e(t)}
\end{figure}
\begin{figure}[!t]
   \includegraphics[width=3 in, height=2 in]{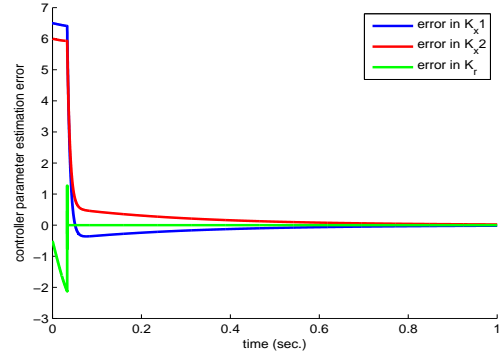}
   \caption{Controller parameter estimation error $\tilde{\phi}(t)$}
\end{figure}

\section{Conclusion}
A memory-based data-driven approach is proposed to solve the MRAC problem for unknown LTI systems, using only input and state data. Past data along the system trajectory is stored and used strategically to guarantee finite-time identification of system parameters ($\theta$), convergence of controller parameters ($\phi$) and tracking error. Unlike the restrictive PE condition in classical adaptive control, only a rank condition on the recorded matrices is required to ensure parameter convergence. Further, the work relaxes two vital assumptions of CL-based frameworks of adaptive control-knowledge of the state derivative and the input $B$ matrix, without altering the exponential convergence result.


%

\appendix
\begin{lem}
A memory stack $X$ is populated with the values of a continuous signal $x(t)\in R^{n \times m}$ at different time points as described below\\
$X=[x_1,x_2,....x_p]$, where $mp\geq n$ and $x_i=x(t_i)\in R^{n\times m}$, $i=1(1)p$ with $t_1>t_2>...t_p=t_e$. \\
If the matrix $X$ is full rank, i.e. $rank(X)=n$, the signal $x(t)$ is exciting over the interval $[t_0,t_e]$ as per Definition 1.
\end{lem}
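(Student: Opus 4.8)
The plan is to establish the statement by contraposition: instead of showing directly that full rank of $X$ forces excitation, I would assume that $x(t)$ is \emph{not} exciting over $[t_0,t_e]$ and derive that $X$ cannot have full row rank. This reformulation is convenient because the excitation condition of Definition 1 is precisely a positive-definiteness assertion on the Gramian $R\triangleq\int_{t_0}^{t_e}x(\tau)x^T(\tau)\,d\tau\in\mathbb{R}^{n\times n}$, and negating positive definiteness hands me a concrete null direction to exploit.

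First I would record that $R$ is symmetric and positive semidefinite, since $v^TRv=\int_{t_0}^{t_e}\|x^T(\tau)v\|^2\,d\tau\geq 0$ for every $v\in\mathbb{R}^n$, and that boundedness of $x(t)$ guarantees $R$ is finite and well defined. If $x(t)$ fails to be exciting, then no $\alpha>0$ satisfies $R\geq\alpha I$; because $R$ is positive semidefinite this is equivalent to $\lambda_{min}(R)=0$, i.e. $R$ is singular. Hence there exists a unit vector $v\neq 0$ with $v^TRv=0$, that is, $\int_{t_0}^{t_e}\|x^T(\tau)v\|^2\,d\tau=0$.

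Next I would invoke continuity of the signal: the integrand $\|x^T(\tau)v\|^2$ is continuous and nonnegative, so a vanishing integral forces it to vanish identically, giving $x^T(\tau)v=0$ for all $\tau\in[t_0,t_e]$. In particular this holds at each sampling instant $t_i\in[t_0,t_e]$, so $x_i^Tv=0$ for $i=1,\dots,p$. Stacking these relations yields $X^Tv=0$ with $v\neq 0$, so $v$ lies in the kernel of $X^T$; consequently $rank(X)=rank(X^T)<n$, which contradicts the hypothesis $rank(X)=n$. This contradiction closes the contrapositive and proves the lemma.

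I expect that there is no genuine obstacle here, only one point requiring care: the passage from ``the Gramian vanishes in direction $v$'' to ``$x^T(\tau)v\equiv 0$ on $[t_0,t_e]$'', which rests on continuity of $x(\cdot)$ together with the elementary fact that a nonnegative continuous function of zero integral is identically zero. Everything else is linear algebra, and I would remark at the outset that the hypothesis $mp\geq n$ is exactly what makes $rank(X)=n$ the appropriate full (row) rank condition for the $n\times mp$ matrix $X$.
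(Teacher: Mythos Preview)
Your contrapositive argument is correct and complete. The paper takes a different, direct route: it defines the Gramian $C=\int_{t_0}^{t_e}x(\tau)x^T(\tau)\,d\tau$, partitions the domain of integration into small windows of half-width $\epsilon$ about each sample time $t_j$ together with the complementary pieces, and invokes the averaging identity $\lim_{\epsilon\to 0^+}\frac{1}{2\epsilon}\int_{a-\epsilon}^{a+\epsilon}f(\tau)\,d\tau=f(a)$ to recover $XX^T=\sum_j x_jx_j^T$ as the contribution of the small windows; since the remaining pieces are positive semidefinite and $XX^T$ is positive definite whenever $X$ has full row rank, the paper concludes $C>0$. Your route is more elementary: instead of manufacturing a quantitative lower bound on $C$ from the samples, you observe that any null direction $v$ of $C$ must, by continuity of the integrand, satisfy $x^T(\tau)v=0$ on the whole interval and hence annihilate every column block of $X^T$, contradicting $\mathrm{rank}(X)=n$. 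This avoids the limiting manipulation entirely and is cleaner; the paper's decomposition, by contrast, makes the link between the sampled outer product $XX^T$ and the continuous Gramian more explicit, at the price of a more delicate $\epsilon$-argument.
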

\begin{proof}
Define $C\triangleq\int_0^{t_e} x(\tau)x^T(\tau)d\tau$. Taking limit as $\epsilon\to 0^+$ and using the corollary of fundamental theorem of calculus $\lim_{\epsilon\to 0^+}\frac{1}{2\epsilon}\int_{a-\epsilon}^{a+\epsilon}f(y)dy=f(a)$ for any continuous function $f$, the following expression can be obtained
\begin{align}
&\lim_{\epsilon\to 0^+}\frac{1}{2\epsilon}C=\lim_{\epsilon\to 0^+}\frac{1}{2\epsilon}\int_0^{t_1-\epsilon}x(\tau)x^T(\tau)d\tau \nonumber\\
& +\sum_{j=1}^{p-1}\frac{1}{2\epsilon}\int_{t_j+\epsilon}^{t_{j+1}-\epsilon}x(\tau)x^T(\tau)d\tau+XX^T
\end{align}
where
\begin{align}
&\lim_{\epsilon\to 0^+}\frac{1}{2\epsilon}\int_0^{t_1-\epsilon}x(\tau)x^T(\tau)d\tau\nonumber\\
&+\sum_{j=1}^{p-1}\frac{1}{2\epsilon}\int_{t_j+\epsilon}^{t_{j+1}-\epsilon}x(\tau)x^T(\tau)d\tau\geq 0
\end{align}
and $XX^T\geq \sigma_{min}(X)I$ where $\sigma_{min}(X)>0$ as $X$ is full rank. Thus, it can be inferred that $\lim_{\epsilon\to 0^+}\frac{1}{2\epsilon}C>0$, implying $C>0$ as $\epsilon>0$. Therefore
\begin{equation}
\int_0^{t_e} x(\tau)x^T(\tau)d\tau\geq\alpha I
\end{equation} 
with $\alpha=\lambda_{min}(C)>0$.
\end{proof}



\ifCLASSOPTIONcaptionsoff
  \newpage
\fi



%


\bibliography{CL_JRNL_Ref}
\bibliographystyle{ieeetr}

%








\end{document}